\documentclass[nofootinbib,eqsecnum,superscriptaddress,amsmath,amssymb,aps,11pt]{revtex4-2}

\usepackage{amsmath,amssymb,amsfonts}
\usepackage{algorithmic}
\usepackage{graphicx}
\usepackage{textcomp}
\usepackage[]{hyperref}
\usepackage{comment}
\usepackage{braket}
\usepackage{amsthm}
\newtheorem{lemma}{Lemma}[section]
\usepackage{tikz}
\usepackage{subcaption}

\DeclareMathOperator{\Tr}{Tr} 
    
\begin{document}

\title{ShotQC: Reducing Sampling Overhead in Quantum Circuit Cutting}

\author{Po-Hung Chen}
\email{brianph@utexas.edu}
\affiliation{Department of Electrical and Computer Engineering, The University of Texas at Austin, TX 78703, USA}
\affiliation{Department of Electrical Engineering, National Taiwan University, Taipei 10617, Taiwan}

\author{Dah-Wei Chiou}
\email{dwchiou@gmail.com}
\affiliation{Graduate Institute of Electronics Engineering, National Taiwan University, Taipei 10617, Taiwan}
\affiliation{Center for Quantum Science and Engineering, National Taiwan University, Taipei 10617, Taiwan}
\affiliation{Physics Division, National Center for Theoretical Sciences, Taipei 10617, Taiwan}

\author{Bo-Hung Chen}
\email{kenny81778189@gmail.com}
\affiliation{Graduate Institute of Electronics Engineering, National Taiwan University, Taipei 10617, Taiwan}
\affiliation{Center for Quantum Science and Engineering, National Taiwan University, Taipei 10617, Taiwan}
\affiliation{Physics Division, National Center for Theoretical Sciences, Taipei 10617, Taiwan}

\author{Jie-Hong Roland Jiang}
\email{jhjiang@ntu.edu.tw}
\affiliation{Department of Electrical Engineering, National Taiwan University, Taipei 10617, Taiwan}
\affiliation{Graduate Institute of Electronics Engineering, National Taiwan University, Taipei 10617, Taiwan}
\affiliation{Center for Quantum Science and Engineering, National Taiwan University, Taipei 10617, Taiwan}
\affiliation{Physics Division, National Center for Theoretical Sciences, Taipei 10617, Taiwan}

\begin{abstract}
The recently developed \emph{quantum circuit cutting} technique greatly extends the capabilities of current noisy intermediate-scale quantum (NISQ) hardware. However, it introduces substantial overhead in both classical postprocessing and quantum resources, as the postprocessing complexity and sampling cost scale exponentially with the number of circuit cuts. In this work, we propose an enhanced circuit cutting framework, \emph{ShotQC}, which effectively reduces the sampling overhead through two key optimizations: \emph{shot distribution} and \emph{cut parameterization}. The former employs an adaptive Monte Carlo strategy to dynamically allocate more quantum resources to subcircuit configurations that contribute more to the variance in the final outcome. The latter exploits additional degrees of freedom in postprocessing to further suppress variance. Integrating these optimizations, ShotQC significantly reduces the sampling overhead without increasing classical postprocessing complexity, as demonstrated across a range of benchmark circuits.
\end{abstract}
\maketitle

\section{Introduction}


Despite the development of many promising quantum algorithms~\cite{365700,10.1145/237814.237866}, current quantum computers lack the scalability and reliability required for practically relevant applications. Most devices today fall into the category of noisy intermediate-scale quantum (NISQ) systems~\cite{Preskill2018quantumcomputingin}, featuring a moderate number of qubits and limited fidelity due to noise~\cite{10487283}. As an intermediate stage between classical and fully error-corrected quantum computers, their scalability remains fundamentally constrained by both qubit count and noise.

To mitigate these limitations, noise-tolerant algorithms such as the Quantum Approximate Optimization Algorithm (QAOA)~\cite{farhi2014quantumapproximateoptimizationalgorithm} and the Variational Quantum Eigensolver (VQE) \cite{Peruzzo2014} have been developed to solve practical optimization problems on today’s noisy quantum hardware. Despite these advances, scalability challenges remain a significant obstacle, restricting the advantages of quantum algorithms over classical approaches, particularly for solving larger-scale, real-world problems.

While there are various promising quantum algorithms, current quantum computers lacks the scale and reliability to apply them on practically relevant problems. Instead, present quantum computers are noisy intermediate-scale quantum (NISQ) devices~\cite{Preskill2018quantumcomputingin}. As an intermediate hardware between classical computers and fully error-corrected quantum computers, NISQ devices have a moderate number of qubits but its scale is inherently limited by insufficient qubit fidelity and high error rates. Adapting to these limitations, some noise-tolerant algorithms such as Quantum Approximate Optimization Algorithm (QAOA)~\cite{farhi2014quantumapproximateoptimizationalgorithm} and Variational Quantum Eigensolver (VQE)~\cite{Peruzzo2014} are developed to solve practical optimization problems with today's noisy quantum devices. Nevertheless, the scalability issue still hinders these algorithms from offering significant benefits over classical devices, particularly for larger problem instances.

To address these obstacles, various hybrid approaches that utilize both classical and quantum resources have been proposed. Among these, the technique of \emph{quantum circuit cutting} has emerged as a powerful tool for enabling the simulation of large quantum circuits on smaller devices by partitioning the circuit into disconnected subcircuits, which can be executed separately and then recombined through post-processing~\cite{Peng_2020}. This method not only facilitates the distribution of computational workloads across multiple smaller devices but also offers distinct advantages over executing a single large circuit. For instance, it enhances fidelity~\cite{Perlin2021,PhysRevLett.130.110601} and reduces noise~\cite{Bechtold_2023}, as smaller devices are typically less prone to noise. These benefits make circuit cutting a promising approach for addressing scalability challenges, particularly in the NISQ era.


The cost of circuit cutting, however, can be substantial. Specifically, the quantum resources required for repeated sampling\footnote{To accurately estimate the probability distribution of the outcomes, the circuit must be run numerous times, commonly referred to as ``shots.''} and the classical resources needed for postprocessing computations both scale exponentially with the number of cuts~\cite{brenner2023optimalwirecuttingclassical}. Consequently, identifying optimal cut points to minimize the total number of cuts is crucial~\cite{Tang_2021,10374226}. 
Other techniques, such as optimized channel decomposition~\cite{harada2023doublyoptimalparallelwire}, randomized measurements~\cite{Lowe_2023}, and sampling-based reconstruction~\cite{Lian2023}, have been proposed to mitigate the associated overhead. Furthermore, the overhead can be reduced by decreasing the size of the observable set required for each cut~\cite{Tang_2021}. State-of-the-art methods achieve this through approaches like rotating the quantum state~\cite{10025537}, designing circuits with cutting efficiency in mind~\cite{10196555}, or detecting optimal cut points dynamically during execution~\cite{10313822}.


In this work, we propose an enhanced circuit cutting framework, \emph{ShotQC}, to reduce sampling overhead by incorporating two novel techniques: \emph{shot distribution optimization} and \emph{cut parameterization optimization}. 
The former dynamically allocates shot numbers across subcircuits using an adaptive Monte Carlo-inspired strategy, while the latter exploits additional degrees of freedom in the postprocessing formula.
Together, these techniques enable ShotQC to reduce sampling overhead, achieving an improvement rate of up to 19× in empirical benchmarks. Theoretical analysis and a benchmark case study further suggest that this improvement rate tend to increase with the number of cuts, provided the cut number remain well below the total qubit number of the original circuit---i.e., when more cuts are required, ShotQC delivers greater relative gains.

Additionally, we implement the ShotQC framework with GPU acceleration, demonstrating its robustness and quantifying the trade-off between classical and quantum resources across various settings.



\section{Quantum circuits as tensor networks}\label{sec: back}
This section presents a brief introduction to the tensor network representation of quantum circuits, based on the formulation in \cite{Peng_2020}.


Any quantum circuit can be represented as a directed graph $G=(V,E)$, where the directed edges indicate the flow of qubits, and the vertices are categorized into three types: input states (denoted by $\lhd$), quantum gates (denoted by $\Box$), and observables (denoted by $\rhd$). This graph can then be converted into a tensor network $\mathcal{A} = \{A(v): v \in V\}$, where each vertex $v$ is associated with a tensor $A(v)$. 

The set of $k$ incoming or outgoing edges attached to a tensor is indexed by $\alpha = (\alpha_1, \alpha_2, \dots, \alpha_k)$, where each $\alpha_j \equiv(\alpha_j^1,\alpha_j^2)\in \{0,1\}^2$ is a pair of binary indices. Let $M(\alpha) = |\alpha_1^1, \dots, \alpha_k^1\rangle \langle \alpha_1^2, \dots, \alpha_k^2|$ be the elementary matrix corresponding to $\alpha$. The tensor entries of the input state $\rho$, the quantum gate $U$, and the observable $O$ are given by
\setlength{\abovedisplayskip}{5pt}  
\setlength{\belowdisplayskip}{5pt}  
\begin{subequations}
\begin{align}
     A(\rho)_\alpha &= \Tr[\rho  M(\alpha)^\dagger], \label{eq:rho}\\
     _\alpha A(U)_\beta &= \Tr[U M(\alpha) U^\dagger  M(\beta)^\dagger], \label{eq:U}\\
     _\beta A(O) &= \Tr[M(\beta)  O], \label{eq:O}
\end{align}
\end{subequations}
where the left index of a tensor represents the incoming edges and the right index represents the outgoing edges. Using these formulae, we can evaluate the quantum circuit through its equivalent tensor network.

\section{Circuit Cutting}
This section presents the essential theory and an overview of circuit cutting,
which decompose an identity channel into multiple \textit{measure-and-prepare} channels.

\subsection{Theory} \label{subsec:theory}
In circuit cutting, the primary goal is to reconstruct the probability distribution of the original (uncut) circuit from those of the subcircuits. This can be achieved by computing the value of the tensor network via the lemma presented below.

If any $2 \times 2$ matrix $A$ can always be decomposed as
\begin{equation}
    \label{eq:decomp}
    A = \sum_{i=1}^\ell r_i \Tr[A O_i]\rho_i,
\end{equation}
in terms of a given set of coefficients $r_i$, observables $O_i$, and single-qubit states $\rho_i$, then the following lemma holds.
\begin{lemma}
\label{lemma:cut}
Let $(G(E, V), \mathcal{A})$ be a tensor network of a quantum circuit, and let $(u\rightarrow v)\in E$ be an edge in $G$ (with $u,v\in V$). Replacing the edge $(u\rightarrow v)$ with the summation of measure-and-prepare channels as
\begin{equation}\label{eq:cut}
\begin{tikzpicture}

\begin{scope}[shift={(0,0)}]
\draw [thick, ->] (0,0) -- (1-0.05,0);
\draw[fill] (0,0) circle [radius=0.05];
\draw[fill] (1,0) circle [radius=0.05];
\node [above] at (0,0) {$u$};
\node [above] at (1,0) {$v$};
\end{scope}

\begin{scope}[shift={(3.5,0)}]
\draw [thick,->] (0,0) -- (0.5,0);
\draw[fill] (0,0) circle [radius=0.05];
\node [above] at (0,0) {$u$};
\draw [thick] (0.5,-0.4) -- (1.1,0) -- (0.5,0.4) -- cycle;
\node [] at (0.75,0) {{\small $O_i$}};
\draw [thick] (1.3,0) -- (1.9,0.4) -- (1.9,-0.4) -- cycle;
\node [] at (1.7,0) {{\small $\rho_i$}};
\draw [thick,<-] (1.9,0) -- (1.9+0.5,0);
\draw[fill] (1.9+0.5,0) circle [radius=0.05];
\node [above] at (1.9+0.5,0) {$v$};
\end{scope}

\node [] at (1.8,0) {$\equiv$};
\node [] at (2.8,0) {$\sum_{i=1}^\ell r_i$};

\end{tikzpicture}
\end{equation}
keeps the value of the tensor network unchanged, provided that $r_i$, $O_i$, and $\rho_i$ satisfy \eqref{eq:decomp}. More precisely, by cutting the edge $(u\rightarrow v)$ and attaching a pair of $\lhd$ and $\rhd$ vertices to the cut endpoints, the graph $G$ is transformed into $G'$, and each term in the summation corresponds to a distinct tensor network $(G', \mathcal{A}_i)$. The value of the original tensor network $T(G, \mathcal{A})$ is then identical to
\begin{equation}\label{eq:tensor value}
    T(G, \mathcal{A}) = \sum_{i=1}^\ell r_i T(G', \mathcal{A}_i).
\end{equation}
\end{lemma}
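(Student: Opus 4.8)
The plan is to reduce the global claim \eqref{eq:tensor value} to a single local algebraic identity at the cut edge, exploiting the multilinearity of tensor-network contraction. First I would record the orthonormality fact $\Tr[M(\alpha)^\dagger M(\beta)] = \delta_{\alpha\beta}$, so that $\{M(\alpha):\alpha\in\{0,1\}^2\}$ is an orthonormal basis of the single-qubit matrix space under the Hilbert–Schmidt inner product; consequently every $2\times 2$ matrix $X$ obeys $X = \sum_\alpha \Tr[M(\alpha)^\dagger X]\,M(\alpha)$. This is what makes the hypothesis \eqref{eq:decomp}, a statement about arbitrary $2\times 2$ matrices, directly usable, since each wire of $G$ carries a single qubit.

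Next I would \emph{open} the cut edge. Contracting every tensor of $G$ except for the index shared along $(u\rightarrow v)$ leaves a two-legged object $N_{\beta\gamma}$, where $\beta$ is the outgoing leg of $A(u)$ and $\gamma$ the incoming leg of $A(v)$, and where all remaining edges of $G$ (including any alternative paths coupling the two sides) are summed inside $N$. Because the edge index occupies exactly these two tensor slots, restoring the edge identifies $\beta=\gamma$, so the original value is $T(G,\mathcal A)=\sum_\alpha N_{\alpha\alpha}$. Performing the cut of the statement instead attaches the observable vertex $O_i$ to the $\beta$-leg and the state vertex $\rho_i$ to the $\gamma$-leg; by \eqref{eq:O} and \eqref{eq:rho} this contributes the factors $\Tr[M(\beta)O_i]$ and $\Tr[\rho_i M(\gamma)^\dagger]$, giving $T(G',\mathcal A_i)=\sum_{\beta,\gamma}N_{\beta\gamma}\,\Tr[M(\beta)O_i]\,\Tr[\rho_i M(\gamma)^\dagger]$.

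The heart of the argument is then the transfer identity
\begin{equation*}
\sum_{i=1}^\ell r_i\,\Tr[M(\beta)O_i]\,\Tr[\rho_i M(\gamma)^\dagger]=\delta_{\beta\gamma},
\end{equation*}
which I would obtain by applying \eqref{eq:decomp} to the particular matrix $A=M(\beta)$, namely $M(\beta)=\sum_i r_i\Tr[M(\beta)O_i]\rho_i$, and then reading off its $M(\gamma)$-coordinate via $\Tr[M(\gamma)^\dagger\,\cdot\,]$ and invoking the orthonormality above. Substituting this identity collapses the double sum: $\sum_i r_i T(G',\mathcal A_i)=\sum_{\beta,\gamma}N_{\beta\gamma}\,\delta_{\beta\gamma}=\sum_\alpha N_{\alpha\alpha}=T(G,\mathcal A)$, which is precisely \eqref{eq:tensor value}.

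I expect the main obstacle to be the bookkeeping that legitimizes the opened-edge object $N_{\beta\gamma}$ in full generality: one must argue, purely from multilinearity, that the entire remainder of the contraction is a common factor independent of how the edge is treated, even when deleting $(u\rightarrow v)$ fails to disconnect $G$, so that the $u$- and $v$-sides stay coupled through other edges. Once $N_{\beta\gamma}$ is justified, the remaining steps are the short orthonormality computation and the one-line application of \eqref{eq:decomp}; no further structure of the circuit is required.
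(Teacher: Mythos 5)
Your proposal is correct and follows essentially the same route as the paper: both reduce the global claim to the local identity $\sum_{i=1}^{\ell} r_i\,\Tr[M(\beta)O_i]\,\Tr[\rho_i M(\gamma)^\dagger]=\delta_{\beta\gamma}$ on the cut edge, verified by applying \eqref{eq:decomp} to the elementary matrix $M(\beta)$ and using Hilbert--Schmidt orthonormality of the $M(\alpha)$. The only difference is presentational: you spell out the multilinearity bookkeeping via the opened-edge environment tensor $N_{\beta\gamma}$, whereas the paper compresses that step into the observation that the edge is an identity channel with tensor $_\alpha A(I)_\beta=\delta_{\alpha,\beta}$.
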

\begin{proof}
Since every edge of the tensor network is equivalent to an identity ($I$) channel,
we have to verify that
\begin{equation}
    _\alpha A(I)_\beta = \sum^\ell_{i=1}r_i\, _\alpha A(O_i)\, A(\rho_i)_\beta.
    \label{eq:verify}
\end{equation}
From ~\cite{Peng_2020}, the l.h.s.\ of \eqref{eq:verify} is given by
\begin{equation}\label{eq:lhs}
    _\alpha A(I)_\beta = \Tr[M(\alpha)\, M(\beta)^\dagger] = \delta_{\alpha, \beta}.
\end{equation}
Meanwhile, the r.h.s.\ is given by
\begin{align}\label{eq:rhs}
    &\sum^\ell_{i=1}r_i\, _\alpha A(O_i)\, A(\rho_i)_\beta 
    \equiv \sum^\ell_{i=1}r_i\, \Tr[M(\alpha)O_i]\Tr[\rho_iM(\beta)^\dagger] \nonumber \\
    &\quad= \Tr\left[\sum^\ell_{i=1} r_i\Tr[M(\alpha)O_i]\rho_i M(\beta)^\dagger\right] \nonumber\\
    &\quad= \Tr[M(\alpha)M(\beta)^\dagger] = \delta_{\alpha,\beta},
\end{align}
where we have used the identity
\begin{equation}
    M(\alpha) = \sum_{i=1}^\ell r_i \Tr[M(\alpha) O_i]\rho_i,
\end{equation}
as a consequence of \eqref{eq:decomp} with $M(\alpha)$ substituted for $A$.

The equality of \eqref{eq:lhs} and \eqref{eq:rhs} proves \eqref{eq:verify}.
\end{proof}

In many quantum computing applications, a large circuit can be partitioned into disconnected subcircuits by cutting only a few wires. This enables the simulation of large circuits beyond the capacity of current quantum hardware by executing subcircuits with different $O_i$ and $\rho_i$, and reconstructing the result of the original circuit via~\eqref{eq:tensor value}~\cite{Peng_2020}.

The mathematical identity
\begin{equation}\label{eq:mitA}
    A=\frac{\Tr(AI)I + \Tr(AX)X + \Tr(AY)Y + \Tr(AZ)Z}{2}
\end{equation}
for any arbitrary $2\times2$ matrix $A$ straightforwardly suggests an $\ell=8$ wire cutting scheme, as suggested in \cite{Peng_2020}, with
\begin{subequations} \label{eq:mit}
\begin{align}
r_1&=+1/2, & O_1&=I, & \rho_1&=\ket{0}\bra{0},\\
r_2&=+1/2, & O_2&=I, & \rho_2&=\ket{1}\bra{1},\\
r_3&=+1/2, & O_3&=X, & \rho_3&=\ket{+}\bra{+},\\
r_4&=-1/2, & O_4&=X, & \rho_4&=\ket{-}\bra{-},\\
r_5&=+1/2, & O_5&=Y, & \rho_5&=\ket{+i}\bra{+i},\\
r_6&=-1/2, & O_6&=Y, & \rho_6&=\ket{-i}\bra{-i},\\
r_7&=+1/2, & O_1&=Z, & \rho_7&=\ket{0}\bra{0},\\
r_8&=-1/2, & O_1&=Z, & \rho_8&=\ket{1}\bra{1},
\end{align}
\end{subequations}
where $\ket{\pm}:=(\ket{0}\pm\ket{1})/\sqrt{2}$ and $\ket{\pm i}:=(\ket{0}\pm i\ket{1})/\sqrt{2}$ are eigenstates of $X$ and $Y$, respectively.
This scheme involves iterating over three different measurement bases ($X$, $Y$, and $Z$) and six preparation states ($\ket{0}$, $\ket{1}$, $\ket{+}$, $\ket{-}$, $\ket{i+}$, and $\ket{i-}$).

There are infinitely many possible constructions of $\{r_i\}$, $\{O_i\}$, and $\{\rho_i\}$. However, since the space of arbitrary $2 \times 2$ matrices is four-dimensional (over the complex numbers), the decomposition in \eqref{eq:decomp} requires at least four preparation states. One such example is the $\ell=4$ scheme, given by
\begin{subequations}\label{eq:cutqc}
\begin{align}
    \{r_i\} &= \{1/2, 1/2, 1/2, 1/2\},\label{eq:decomp_c}\\
    \{O_i\} &= \left\{2\ket{0}\bra{0}-X, 2\ket{1}\bra{1}-Y, X, Y\right\},\label{eq:decomp_O}\\
    \{\rho_i\} &= \left\{\ket{0}\bra{0}, \ket{1}\bra{1}, \ket{+}\bra{+}, \ket{+i}\bra{+i}\right\}\label{eq:decomp_rho},
\end{align}
\end{subequations}
which involves three measurement bases ($X$, $Y$, and $Z$) and four preparation states ($\ket{0}$, $\ket{1}$, $\ket{+}$, and $\ket{+i}$), matching the minimum number of states required. This decomposition was proposed in ~\cite{Tang_2021} and will be used in this section as an illustrative example.

By restricting ourselves to Pauli bases and states, which are easier to prepare and measure, we propose a generic parameterized decomposition in Section \ref{sec:cut para}, of which the schemes in both \eqref{eq:mit} and \eqref{eq:cutqc} can be viewed as special cases.

The observable $O_i$ corresponds to measuring the expectation value of $O_i$ under specific tensor constraints. For example, the observable $\ket{0}\bra{0}$ corresponds to the probability of obtaining the $\ket{0}$ outcome, conditioned on the state of other qubits, when the qubit is measured in the $Z$ basis.
Similarly, each state $\rho_i$ corresponds to preparing the qubit in the respective quantum state. The construction of $\{c_i\}$, $\{O_i\}$, and $\{\rho_i\}$ collectively resembles multiple measure-and-prepare channels, which form the core mechanism behind circuit cutting.

\subsection{Workflow}
We now shift our perspective from a theoretical view to a practical one by describing the procedure for circuit cutting. We must determine the measurement set $S_\text{meas}$ and the preparation set $S_\text{prep}$, which specify the measurement bases and preparation states used at each measure-and-prepare channel. These sets can be inferred from a particular choice made for \eqref{eq:decomp}. For instance, given the observable set $\{O_i\}$ and the state set $\{\rho_d\}$ in \eqref{eq:cutqc}, the measurement set 
\begin{equation}
    \label{eq:smeas}
    S_\text{meas} = \{ X, Y, Z\}
\end{equation}
and the preparation set 
\begin{equation}
    \label{eq:sprep}
    S_\text{prep} = \{\ket{0}, \ket{1}, \ket{+}, \ket{+i}\}
\end{equation}
are sufficient, since the measurements in the $X$, $Y$, $Z$ bases provide the results for the observables $\ket{0/1}\bra{0/1}$, $\ket{\pm}\bra{\pm}$, and$\ket{\pm i}\bra{\pm i}$, respectively.

After specifying the sets, the workflow for circuit cutting can generally be divided into three steps:

\begin{enumerate}
    \item \textbf{Finding cuts}. Choose the desired cut points that divide the original circuit into multiple subcircuits. These cut points are optimally selected to minimize metrics such as postprocessing cost and sampling overhead, subject to width or depth constraints. Once the subcircuits are defined, generate the \emph{subcircuit configurations}, i.e., all possible variants of the subcircuits applying different measurement bases from $S_\text{meas}$ and preparation states from $S_\text{prep}$.
    
    \item \textbf{Executing subcircuits}.
    Evaluate the subcircuit configurations using classical simulation or quantum hardware. When using quantum hardware, the number of shots (as described in Section \ref{sec: back}) must be predetermined based on the desired accuracy and selected strategies. 
    Note that the number of subcircuit configurations scales exponentially with the number of cuts, and thus the runtime for evaluating all subcircuits will also scale exponentially unless parallel computing is applied.
    
    \item \textbf{Postprocessing}. Combine the probability distributions obtained from the previous step according to the circuit structure and the cut parameters. Depending on what the outcome of interest is (e.g., the user may be interested in the full spectrum of all outcomes or the probability of a particular outcome), different postprocessing strategies can be applied. For an $\ell=4$ decomposition, a total of $4^K$ Kronecker products are required for classical postprocessing, where $K$ is the number of cuts~\cite{Tang_2021}.
\end{enumerate}

\begin{figure}
    \centering
    \includegraphics[width=\columnwidth]{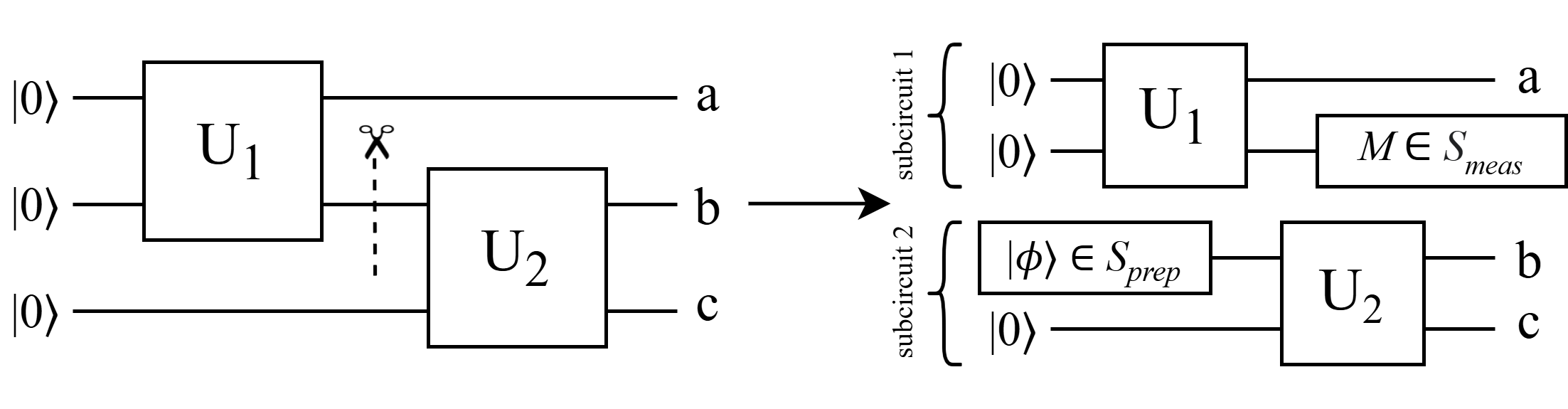}
    \caption{A 3-qubit circuit is cut into two 2-qubit subcircuits.}
    \label{fig:qc_example}
\end{figure}

\subsection{Example} \label{subsec:example}

We now demonstrate these steps using the example shown in Figure \ref{fig:qc_example}, where a 3-qubit circuit is cut into two 2-qubit quantum subcircuits. We employ the decomposition \eqref{eq:decomp} with the corresponding $S_\text{meas}$ in \eqref{eq:smeas} and $S_\text{prep}$ in \eqref{eq:sprep}.

First, since the cut has already been located, no additional search is required. The original wire being cut gives rise to a measurement operation in subcircuit 1 and an initial state preparation in subcircuit 2. Once the subcircuits are defined by the cut, the subcircuit configurations are generated by applying different elements from $S_\text{meas}$ and $S_\text{prep}$. In this case, three subcircuit configurations are generated for subcircuit 1, and four subcircuit configurations are generated for subcircuit 2.

Second, all subcircuit configurations must be evaluated. Suppose that each subcircuit configuration is evaluated on a quantum computer with 1000 shots, leading to a total of 7000 shots across the seven subcircuit configurations. If multiple quantum computers are available, these configurations can be executed in parallel without the need for classical communication, as they are independent of one another. The output probabilities for each subcircuit configuration are computed as the ratio of the number of occurrences of each outcome to the 1000 shots.

Finally, the output probability of the original circuit can be recovered using Lemma \ref{lemma:cut}. Let $(a,b,c) \in \{0,1\}^3$ represent a bitstring outcome of the original circuit. Based on the decomposition in \eqref{eq:cutqc}, the probability $p(a,b,c)$ of obtaining the bitstring outcome $(a,b,c)$ is given by
\begin{equation}\label{eq:sump}
     p(a,b,c) = \sum_{i=1}^4 r_i\,  p^{(1)}_i(a) \, p^{(2)}_i(b,c),
\end{equation}
where $r_i$ are given by \eqref{eq:decomp_c}, and $p^{(1)}_i$ and $p^{(2)}_i$, defined below, are derived from the probabilities associated with the subcircuit configurations of subcircuit 1 and subcircuit 2, respectively.

In accordance with \eqref{eq:decomp_O}, we define
\begin{subequations}\label{eq:p1}
\begin{align}
     p^{(1)}_1 &= 2p^{(1)}(a0|Z) - [p^{(1)}(a0|X) - p^{(1)}(a1|X)], \\
     p^{(1)}_2 &= 2p^{(1)}(a1|Z) - [p^{(1)}(a0|Y) - p^{(1)}(a1|Y)], \\
     p^{(1)}_3 &= p^{(1)}(a0|X) - p^{(1)}(a1|X), \\
     p^{(1)}_4 &= p^{(1)}(a0|Y) - p^{(1)}(a1|Y),
\end{align}
\end{subequations}
where $p^{(1)}(a0|Z)$ represents the probability of obtaining the outcome $(a,0)$ when measuring in the $Z$ basis at the cut location in subcircuit 1, with similar notation applying to the other terms.

In accordance with \eqref{eq:decomp_rho}, we define
\begin{subequations}\label{eq:p2}
\begin{align}
    p^{(2)}_1 &= p^{(2)}(bc|\ket{0}), \\
    p^{(2)}_2 &= p^{(2)}(bc|\ket{1}), \\
    p^{(2)}_3 &= p^{(2)}(bc|\ket{+}), \\
    p^{(2)}_4 &= p^{(2)}(bc|\ket{+i}),
\end{align}
\end{subequations}
where $p^{(2)}(bc|\ket{0})$ represents the probability of obtaining the outcome $(b,c)$ when preparing the initial state $\ket{0}$ at the cut location in subcircuit 2, with similar notation applying to the other terms.

By iterating over all 8 possible bitstrings $(a,b,c)$, we obtain the full probability distribution of the original circuit. If only certain outcomes are of interest, we can restrict the query to those specific bitstrings. The more shots executed for each subcircuit configuration in the second step, the lower the expected error in the reconstructed probabilities. Consequently, the calculated probability of obtaining $(a,b,c)$ will converge to the ground-truth value, provided that a sufficient number of shots are performed for each subcircuit configuration.

\subsection{Sampling Overhead}
In addition to the classical postprocessing cost, the sampling overhead is a major concern in the circuit-cutting scenario. The sampling overhead refers to the multiplicative factor that indicates how much the total number of shots must be increased to achieve the same accuracy for the output probability as in the original, uncut circuit.

The sampling overhead scales as $O(16^K)$, where $K$ is the number of cuts. In the work~\cite{Lowe_2023}, this overhead was reduced to $O(4^K)$ using randomized measurements within the Local Operations and Classical Communication (LOCC) framework. However, the LOCC framework introduces additional resource demands and potential errors due to the need for synchronization.

In this work, we adhere to the no-communication setting. Instead of reducing the theoretical sampling overhead of $O(16^K)$, we propose an adaptive approach that strategically allocates quantum resources, significantly reducing the sampling overhead in most practical cases.
With appropriate modifications, it may be possible to incorporate the techniques proposed in~\cite{Lowe_2023} into our approach.

\section{Problem Formulation} \label{sec:problem}
Circuit cutting allows the simulation of large circuits using smaller ones. However, the exponential overhead in both classical resources (e.g., postprocessing costs) and quantum resources (e.g., the cost of performing measurements and state preparation) presents a significant challenge for large-scale applications. To address this, we propose a novel strategy that significantly reduces the sampling overhead associated with state preparation and measurement.



We aim for seamless integration with existing methods that minimize postprocessing costs by optimizing the locations of cuts. Accordingly, we assume that the circuit partition has been optimally determined in advance by cut-finding algorithms.

Our problem is formulated as follows: Given a quantum circuit and its partition into multiple subcircuits, the objective is to minimize the total sampling overhead required to reconstruct the output probability distribution of the original, uncut circuit within a desired level of accuracy. This is equivalent to minimizing the statistical error under constrained quantum resources, as increasing the number of shots consistently decreases the expected statistical error.

After performing a sufficient number of shots for all subcircuit configurations, we obtain estimated values for all subcircuit probabilities, such as $p^{(1)}(\cdot|\cdot)$ and $p^{(2)}(\cdot|\cdot)$ in \eqref{eq:p1} and \eqref{eq:p2}, along with their variance-covariance relations, which characterize the statistical uncertainties in these estimates and the dependencies between them.

Through \eqref{eq:tensor value}, exemplified by \eqref{eq:sump}, we then compute the \emph{estimated} probabilities $\{p_i\}$ for all $2^n$ possible bitstring outcomes $\{i\}$ of the original circuit’s $n$ qubits. The uncertainties are also propagated to $\{p_i\}$, described by a $2^n \times 2^n$ variance-covariance matrix $\Sigma$, in accordance with \eqref{eq:sump}. Here, $\Sigma_{ij} \equiv \sigma_{p_i p_j}$ represents the covariance between $p_i$ and $p_j$, and in particular, $\Sigma_{ii} \equiv \text{Var}(p_i)$ is the variance of $p_i$. This matrix $\Sigma$ defines the shape of a high-dimensional Gaussian distribution centered at $\{p_i\}$ in the hyperspace constrained by $\sum_i p_i = 1$. The Gaussian distribution represents the probability distribution for the \emph{ground-truth} probabilities $\{\hat{p}_i\}$, given the estimates $\{p_i\}$. The eigenvalues of $\Sigma$ quantify how closely $\{p_i\}$ approximates $\{\hat{p}_i\}$.\footnote{Note that one of the eigenvalues must vanish as a consequence of the constraint $\sum_i p_i = 1$. The corresponding eigenvector is the normal vector of the constrained hyperspace.}

To quantify the overall statistical uncertainty due to finite sampling, we define the \emph{total variance} as
\begin{equation}
    \label{eq:error}
    \text{Err} = \text{Tr}\, \Sigma \equiv \sum_{i=0}^{2^n-1} \text{Var}(p_i),
\end{equation}
which provides a measure of the cumulative uncertainty across all probabilities $\{p_i\}$.

\section{ShotQC: Framework}

\begin{figure}
    \centering
    \includegraphics[width=0.45\linewidth]{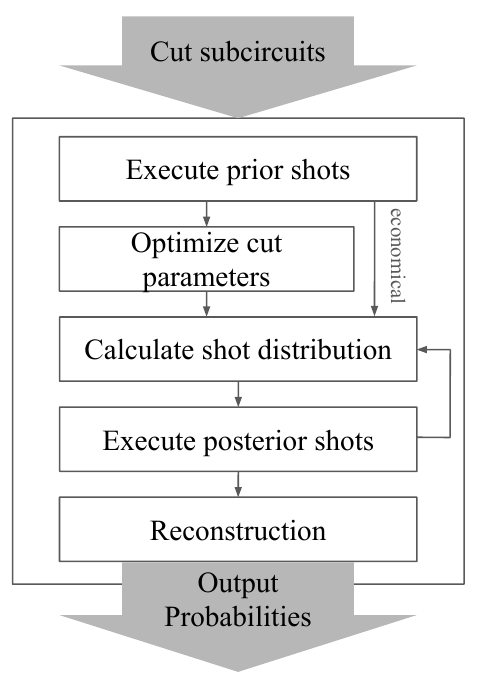}
    \caption{The workflow of ShotQC. The cut subcircuits is obtained by applying cut-finding methods on the original circuit.}
    \label{fig:overview}
\end{figure}


\subsection{ShotQC: Overview}

The workflow of ShotQC is depicted in Figure~\ref{fig:overview}. ShotQC incorporates two novel techniques: \emph{cut parameterization optimization} and \emph{shot distribution optimization}. The former expands the parameter space for postprocessing without incurring additional costs, while the latter finds the optimal allocation of quantum resources to minimize the statistical error.
Once the optimal cut points are identified by cut-finding algorithms, the workflow proceeds to the third block in Figure~\ref{fig:overview}.


Given the disconnected subcircuits, subcircuit configurations are generated by applying different measurement bases from $S_\text{meas}$ and preparation states from $S_\text{prep}$. These configurations are then executed for a predefined number of repetitions, referred to as ``prior shots,'' to produce a rough estimate $P$ of the outcome probabilities for all subcircuit configurations.

Next, depending on the circuit size or the user's preference, the cut parameterization optimization step is either performed---optimizing the cut parameters $\Theta$ with respect to the loss function defined in \eqref{eq:loss}---or skipped in the economical route, with $\Theta$ set to zeros.

Once $\Theta$ and $P$ are determined, ShotQC begins the iterative process of shot distribution optimization. In each iteration, the optimal shot distribution is computed using \eqref{eq:opt_distribute}, and then the subcircuit configurations are executed for a designated number of repetitions, referred to as ``posterior shots,'' following the optimal distribution. Based on the outcomes of these posterior shots, the estimate $P$ is then updated.

This iterative process continues until the total number of shots reaches a predefined limit. Finally, the outcomes of all shots are aggregated, and the outcome probabilities of the original circuit are reconstructed through classical postprocessing.

\subsection{ShotQC: Cut Parameterization Optimization}\label{sec:cut para}
In Section~\ref{subsec:theory}, we establish that any decomposition of an arbitrary $2\times 2$ matrix satisfying \eqref{eq:decomp} corresponds to a valid wire-cutting scheme. This is subsequently reflected in the postprocessing phase, where the probability distribution of the original circuit is reconstructed based on the chosen decomposition, as demonstrated in Section~\ref{subsec:example}. These findings naturally raise the question: How many valid combinations of $r_i$, $O_i$, and $\rho_i$ are possible? Furthermore, if we extend our analysis to more general decompositions (parameterized by some ``cut parameters''), can we devise a more efficient postprocessing method that further reduces statistical errors, even when utilizing the same subcircuit outcomes?

\subsubsection{Problem Reduction}
It turns out there are infinitely many possible choices for $r_i$, $O_i$, and $\rho_i$, even when $O_i$ and $\rho_i$ are constrained to special types, as shown below.

Mathematically, $\rho_i$ do not necessarily need to be valid density matrices. However, in practice, $\rho_i$ must be easy to prepare. Consequently, we restrict $\rho_i$ to be eigenstates of the Pauli matrices $X$, $Y$, and $Z$. That is, we consider
\begin{equation}
    S_\text{prep} = \{\ket{0}, \ket{1}, \ket{+}, \ket{-}, \ket{+i}, \ket{-i}\},
\end{equation}
which is the set of states used in ~\cite{Peng_2020}.
Similarly, we impose a restriction on the observables, limiting them to the commonly available measurement set
\begin{equation}
    S_\text{meas} = \{ X, Y, Z\}.
\end{equation}

With these restrictions in place, the decomposition \eqref{eq:decomp} can be recast as
\begin{equation}\label{eq:decomp generalized}
    A = \sum_{i=1}^6 \sum_{j=1}^6 r_{i,j}\text{Tr}[AO_j]\,\ket{\rho_i}\bra{\rho_i},
\end{equation}
where $\ket{\rho_i}\in S_\text{prep}$ and $O_j\in\{X_\pm=\ket{\pm}\bra{\pm}, Y_\pm=\ket{\pm i}\bra{\pm i},$ $Z_+=\ket{0}\bra{0}, Z_-=\ket{1}\bra{1}\}$, corresponding to the positive- and negative-eigenvalue outcomes of the measurement observables in $S_\text{meas}$.
The decomposition then can be represented by a $6\times6$ table of $r_{i,j}$, where the column indexes ($j$) correspond to the measurement outcomes, the row indexes ($i$) correspond to the initial states. For instance, the decomposition \eqref{eq:mitA} can be represented by Table \ref{table: mit}. Note that we have eliminated the $I$ observable since
\begin{equation}\label{eq:I2ZZ}
    I = I_++I_- = \ket{0}\bra{0}+\ket{1}\bra{1} =Z_++Z_-.
\end{equation}

\begin{table}[h!]
\caption{The tabular representation of the decomposition described in \eqref{eq:mitA}.}
\label{table: mit}
\centering
\begin{tabular}{|c|c|c|c|c|c|c|}
\hline
i\textbackslash j & $X_{+}$ & $X_{-}$ & $Y_{+}$ & $Y_{-}$ & $Z_{+}$ & $Z_{-}$ \\ \hline
$\ket{0}$ & $1/2$ & $-1/2$ & $0$ & $0$ & $0$ & $0$ \\ \hline
$\ket{1}$ & $-1/2$ & $1/2$ & $0$ & $0$ & $0$ & $0$ \\ \hline
$\ket{+}$ & $0$ & $0$ & $1/2$ & $-1/2$ & $0$ & $0$ \\ \hline
$\ket{-}$ & $0$ & $0$ & $-1/2$ & $1/2$ & $0$ & $0$ \\ \hline
$\ket{+i}$ & $0$ & $0$ & $0$ & $0$ & $1$ & $0$ \\ \hline
$\ket{-i}$ & $0$ & $0$ & $0$ & $0$ & $0$ & $1$ \\ \hline
\end{tabular}
\end{table}

\begin{table}[h!]
\caption{The tabular representation of a generalized decomposition described in \eqref{eq:decomp generalized}.}
\label{table:final}
\centering
\begin{tabular}{|c|c|c|c|c|c|c|}
\hline
i\textbackslash j & $X_{+}$ & $X_{-}$ & $Y_{+}$ & $Y_{-}$ & $Z_{+}$ & $Z_{-}$ \\ \hline
$\ket{0}$ & $1/2-a_1-c_1$ & $-1/2-a_1-c_2$ & $-b_1-c_3$ & $-b_1-c_4$ & $a_1+b_1-c_5$ & $a_1+b_1-c_6$ \\ \hline
$\ket{1}$ & $-1/2-a_2-c_1$ & $1/2-a_2-c_2$ & $-b_2-c_3$ & $-b_2-c_4$ & $a_2+b_2-c_5$ & $a_2+b_2-c_6$ \\ \hline
$\ket{+}$ & $-a_3-d_1$ & $-a_3-d_2$ & $1/2-b_3-d_3$ & $-1/2-b_3-d_4$ & $a_3+b_3-d_5$ & $a_3+b_3-d_6$ \\ \hline
$\ket{-}$ & $-a_4-d_1$ & $-a_4-d_2$ & $-1/2-b_4-d_3$ & $1/2-b_4-d_4$ & $a_4+b_4-d_5$ & $a_4+b_4-d_6$ \\ \hline
$\ket{+i}$ & $-a_5+c_1+d_1$ & $-a_5+c_2+d_2$ & $-b_5+c_3+d_3$ & $-b_5+c_4+d_4$ & $1+a_5+b_5+c_5+d_5$ & $a_5+b_5+c_6+d_6$ \\ \hline
$\ket{-i}$ & $-a_6+c_1+d_1$ & $-a_6+c_2+d_2$ & $-b_6+c_3+d_3$ & $-b_6+c_4+d_4$ & $a_6+b_6+c_5+d_5$ & $1+a_6+b_6+c_6+d_6$ \\ \hline
\end{tabular}
\end{table}

\subsubsection{Degrees of Freedom}

In addition to replacing $I$ with $Z_+ + Z_-$, the decomposition allows for additional degrees of freedom, even when the set of observables and states is fixed. In particular, we observe that 
\begin{equation}
    X_++X_-=Y_++Y_-=Z_++Z_- = 
\begin{pmatrix}
1 & 0 \\
0 & 1
\end{pmatrix}=I,
\end{equation}
which is an extension of \eqref{eq:I2ZZ} to all Pauli bases. By linearity, we have
\begin{align}
    \text{Tr}[AX_+]+\text{Tr}[AX_-] &=\text{Tr}[AY_+]+\text{Tr}[AY_-] \nonumber \\&=\text{Tr}[AZ_+]+\text{Tr}[AZ_-]
\end{align}
for any $2\times2$ matrix $A$. This equation shows that we can adjust the coefficients $r_{i,j}$ for different $O_j$ while preserving the validity of the decomposition, provided the equation above is satisfied. In the tabular representation, without loss of generality, for any row $i$, we may subtract $a_i$ from $r_{i,1/2}$, subtract $b_i$ from $r_{i,3/4}$, and add $a_i + b_i$ to $r_{i,5/6}$, where $a_i$ and $b_i$ are arbitrary real numbers.

Similarly, the same reasoning can be applied to the initial states $\ket{\rho_i}$. Since
\begin{align}
    \ket{0}\bra{0} + \ket{1}\bra{1} &= \ket{+}\bra{+} + \ket{-}\bra{-} \nonumber\\
    &= \ket{+i}\bra{+i} + \ket{-i}\bra{-i},
\end{align}
any adjustments to $r_{i,j}$ that preserve these relations will not compromise the validity of the decomposition. The permissible modifications in the tabular representation are analogous to those described previously, with rows and columns exchanged and $a_i, b_i$ replaced by $c_j, d_j$. Incorporating all such adjustments results in Table~\ref{table:final}, which contains a total of 24 parameters that can be arbitrarily set. We refer to this parameterized form as a ``parameterized cut'' and denote the set of adjustable parameters as
\begin{equation}
    \mathbf{\theta} = \{a_i, b_i, c_j, d_j \mid i,j = 1, \dots, 6\},
\end{equation}
whose significance will become evident during postprocessing. By expanding the design space for postprocessing---without the need for prior knowledge or complex constructions of initial states---we can thus explore cut parameters that minimize the statistical error during postprocessing.

\subsubsection{Design Trade-offs for Overhead Reduction}
\label{sssec:tradeoff}
While using all six states as possible options provides a total of 24 degrees of freedom, it comes at the expense of increased postprocessing cost. Instead of the original cost of $4^K$ Kronecker products in the $\ell=4$ scenario, we require a total of $6^K$ Kronecker products for reconstruction, as $\ell=6$. This introduces a trade-off between postprocessing cost and the number of possible states, which is directly correlated with the size of the design space. To reduce the six states to four, such as $\{\ket{0}, \ket{1}, \ket{+}, \ket{+i}\}$, certain parameters must be modified, while others are fixed. Specifically, this can be achieved by setting
\begin{subequations}
    \begin{align}
        &(a_2, a_4, b_2, b_4) = (0,0,0,0),\\
        &(c_1, ..., c_6) = (-1/2,1/2,0,0,0,0),\\
        &(d_1, ..., d_6) = (0,0,-1/2,1/2,0,0),
    \end{align}
\end{subequations}
which reduces the degrees of freedom from 24 to 8.


Depending on the properties of the circuit cutting task, including number of cuts, circuit scale, etc., different design decisions can be made regarding this trade-off between design space and classical postprocessing overhead. Since the minimum states required for a decomposition is 4, we cannot trade more design space for overhead reduction. Therefore, in this work we will mainly discuss the $\ell=6$ option that have a larger design space and more classical overhead, and the $\ell=4$ option that have a smaller design space and less classical overhead.

\subsection{ShotQC: Shot Distribution Optimization}
In the previous sections, we discussed how the output probability of the original circuit can be reconstructed using data obtained from simulating subcircuit configurations. However, not all data contribute equally in the reconstruction. For instance, consider the example shown in \eqref{eq:sump} through \eqref{eq:p2}. For a given outcome bitstring $\{a, b, c\}$, if $p^{(1)}(a0|X) - p^{(1)}(a1|X) = 0$, then the outcome probability $p(a, b, c)$ is theoretically independent of $p^{(2)}(bc|\ket{+})$, as $p^{(2)}(bc|\ket{+})$ is multiplied by $0$ in \eqref{eq:sump}. In such cases, fewer shots are required for the configuration that generates $p^{(2)}(bc|\ket{+})$, since its statistical uncertainty will be suppressed by the estimated value of $p^{(1)}(a0|X) - p^{(1)}(a1|X)$, which is expected to be close to $0$. Instead, the quantum resources can be reallocated to other configurations that have a greater impact on the final result. Our main objective is to explore this concept and calculate the optimal shot distribution across all subcircuit configurations using the limited information obtained from partial sampling.


\subsubsection{Calculating the Error Function}
With the error function defined in \eqref{eq:error}, we investigate its relation with the number of shots across all circuit configurations.
\begin{lemma}
    \label{lemma:linear variance}
    Let $P$ denote the set of estimated probabilities, and let $\Theta = \{\mathbf{\theta_i}\}$ represent the set of all cut parameters. With a sufficient number of sampling shots $N_e$ performed for each circuit configuration $e \in E$, the error function defined in \eqref{eq:error} can be approximated in the form  
    \begin{equation}
        \mathrm{Err} = \sum_{e=1}^E \frac{1}{N_e} f_e(P, \Theta),
    \end{equation}  
    where $f_e(P, \Theta)$ are the coefficient functions derived from the propagation of uncertainty.
\end{lemma}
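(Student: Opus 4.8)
The plan is to treat each reconstructed probability $p_i$ as a smooth (in fact multilinear) function of the raw subcircuit frequencies and apply first-order propagation of uncertainty, exploiting the fact that distinct configurations are sampled independently. First I would fix notation: for each configuration $e\in E$, let $\hat q^{(e)}_k$ denote the estimated frequency of the $k$-th output bitstring obtained from $N_e$ shots. Since each configuration is an independent multinomial experiment, the fluctuations $\epsilon^{(e)}_k \equiv \hat q^{(e)}_k - q^{(e)}_k$ about the true frequencies $q^{(e)}_k$ have mean zero and covariance $\text{Cov}(\epsilon^{(e)}_k,\epsilon^{(e')}_{k'}) = \frac{1}{N_e}\,\delta_{ee'}\bigl(q^{(e)}_k\delta_{kk'}-q^{(e)}_kq^{(e)}_{k'}\bigr)$; the factor $\delta_{ee'}$ encodes that no cross-configuration correlations exist in the no-communication setting, and every nonzero entry carries a single $1/N_e$.

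Next I would expand. By \eqref{eq:tensor value}, exemplified by \eqref{eq:sump}, each $p_i=F_i(\{\hat q^{(e)}_k\};\Theta)$ is a polynomial whose coefficients are fixed by the cut parameters $\Theta$ (the $r_{i,j}$ of Table~\ref{table:final}). A Taylor expansion of $F_i$ about the true frequencies gives $p_i = F_i(\{q\};\Theta) + \sum_{e,k}\partial_{q^{(e)}_k}F_i\,\epsilon^{(e)}_k + (\text{higher order})$, so to leading order
\[
\text{Var}(p_i) \approx \sum_{e,e'}\sum_{k,k'} \bigl(\partial_{q^{(e)}_k}F_i\bigr)\bigl(\partial_{q^{(e')}_{k'}}F_i\bigr)\,\text{Cov}\bigl(\epsilon^{(e)}_k,\epsilon^{(e')}_{k'}\bigr).
\]
Substituting the covariance above, the $\delta_{ee'}$ collapses the double configuration sum to a single one and pulls out $1/N_e$, yielding $\text{Var}(p_i)\approx\sum_e \frac{1}{N_e}\,g_{i,e}(P,\Theta)$ with $g_{i,e}(P,\Theta)=\sum_{k,k'}\bigl(\partial_{q^{(e)}_k}F_i\bigr)\bigl(\partial_{q^{(e)}_{k'}}F_i\bigr)\bigl(q^{(e)}_k\delta_{kk'}-q^{(e)}_kq^{(e)}_{k'}\bigr)$, a quantity depending only on the estimated probabilities $P$ and parameters $\Theta$, not on any $N_e$.

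Then I would sum over the $2^n$ bitstrings. Because \eqref{eq:error} defines $\mathrm{Err}=\Tr\,\Sigma=\sum_i\text{Var}(p_i)$, interchanging the finite sums gives $\mathrm{Err}\approx\sum_e\frac{1}{N_e}f_e(P,\Theta)$ with $f_e(P,\Theta)\equiv\sum_i g_{i,e}(P,\Theta)$, which is exactly the claimed form; the off-diagonal entries of $\Sigma$ never enter, since only its trace is required.

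The main obstacle is justifying the approximation sign, i.e.\ that the discarded terms are genuinely subleading for sufficiently large $N_e$. Here I would invoke the multilinear structure of $F_i$: each configuration belongs to a single subcircuit and, in reconstructions such as \eqref{eq:sump}, the frequencies of any one configuration enter $p_i$ only to first degree, because products run across different subcircuits and never within one configuration. Consequently the per-configuration contribution to $\text{Var}(p_i)$ is exact rather than approximate, and the only neglected terms arise from products of fluctuations belonging to two distinct configurations, whose contribution to the variance scales as $1/(N_eN_{e'})$, together with third- and higher-moment corrections of order $1/N_e^2$. All of these are suppressed relative to the retained $1/N_e$ terms once every $N_e$ is large, which is precisely the ``sufficient number of shots'' hypothesis; this same observation shows that $p_i$ is, to this order, an unbiased estimator, so no bias term contaminates the variance.
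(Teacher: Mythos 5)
Your proof is correct and follows essentially the same route as the paper's: first-order propagation of uncertainty (linearization of the reconstruction formula), statistical independence across configurations, within-configuration multinomial covariances each carrying a single $1/N_e$ factor, and a final sum over all $2^n$ outcomes to assemble $f_e(P,\Theta)=\sum_i f_{i,e}(P,\Theta)$. The only notable differences are that you justify the linearization more carefully via the multilinear structure of the reconstruction (the paper simply asserts that the first-order Taylor approximation suffices for large $N_e$, and separately adds a complexity analysis for computing the $f_e$), and you use the standard multinomial covariance $-q_k q_{k'}/N_e$ where the paper writes $-2q_1q_2/N_e$; neither difference changes the form of the result.
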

\begin{proof}
    Since the error function defined in \eqref{eq:error} is
    \begin{equation}
        \label{eq:error_}
        \text{Err} = \sum_{i=0}^{2^n-1} \text{Var}(p_i),
    \end{equation}
    we can first consider the variance $\text{Var}(p_i)$ for some given outcome $i$. As demonstrated in \eqref{eq:sump} and lemma \ref{lemma:cut}, the probability of a single outcome can be calculated by
    \begin{equation}
        p_i = \sum_{j=0}^{\ell^K-1}c_{ij}T(G', \mathcal{A}_{ij})
    \end{equation}
    with $\ell$ being the number of decomposition terms, $K$ the number of cuts, and  $T(G', \mathcal{A}_{ij})$ the tensor network corresponding to the cut configuration $j$ and the outcome $i$.

    $T(G', \mathcal{A}_{ij})$ can further be calculated by 
    \begin{equation}
         T(G', \mathcal{A}_{ij}) = \prod_{x=1}^m V_{ijx},
    \end{equation}
    where $V_{ijx}$ is the value of subcircuit $x$ corresponding to the cut configuration $j$ and outcome $i$, which is calculated by linearly combining the subcircuit probabilities such as \eqref{eq:p1} and \eqref{eq:p2}. Thus, $p_i$ can be expressed as
    \begin{equation}
        p_i = \sum_{j=0}^{\ell^K-1}c_{ij}(\prod_{x=1}^m V_{ijx}).
    \end{equation}

    When sufficient shots are taken, the variance of a function is small enough that it can be closely approximated by the variance of its first order Taylor expansion. The similar ``linearization'' can be applied to $p_i$. Collecting the terms, we have
    \begin{equation}
    \label{eq:varlinear}
        \text{Var}(p_i) \approx \text{Var}(\sum_q g(P, \Theta, q)q)
    \end{equation}
    with $q$ being the probabilities used in calculating $p_i$ and $g(P, \Theta, q)$ be its coefficient after the first order Taylor expansion. $q$ is used as a variable when it appears in $\text{Var}$ or $\text{Cov}$, and as its value otherwise. Note that since $g(P, \Theta, q)$ can be tracked during reconstruction, obtaining it takes the same time complexity for postprocessing with little to no additional cost.
    
    Since we assume no information on the circuit, the configurations have no correlation to each other and can be treated as independent random variables. Furthermore, for each configuration $e$ given a fixed outcome $i$, only a subset of its probabilities are used. These probabilities have variance 
    \begin{equation}
        \text{Var}(q) = \frac{1}{N_e}q(1-q)
    \end{equation}
    when the probability term $q$ belongs to configuration $e$. The covariance between two probabilities $q_1, q_2$ in the same configuration $e$ is 
    \begin{equation}
        \text{Cov}(q_1, q_2) = -2\cdot\frac{1}{N_e}q_1q_2.
    \end{equation}
    Thus, the variance of the  can be expressed as 
    \begin{equation}
        \text{Var}(p_i) = \sum_{e=1}^E\frac{1}{N_e}f_{i,e}(P,\Theta)
    \end{equation}
    by expanding the variance of the linear combinations of probabilities in equation \ref{eq:varlinear} with the equations provided above. By summing over all possible outcomes, we have 
    \begin{equation}
        \sum_{i=0}^{2^n-1}  \text{Var}(p_i) = \sum_{e=1}^E\frac{1}{N_e}f_e(P,\Theta)
    \end{equation}
    where
    \begin{equation}
        f_e(P, \Theta) = \sum_{i=0}^{2^n-1}f_{i,e}(P, \Theta),
    \end{equation}
    which proves the lemma. 

    In addition, since postprocessing takes $\ell^K$ terms, each term requiring inquiry to $O(6^{\#\text{meas}})$ observable probabilities for each subcircuit with $\#meas$ measure channels, leading to a complexity of $O(\ell^K\times6^{\#\text{meas}})$. Collecting the variance and covariance terms, however, only require $\ell^{\#\text{prep}}\times 3^{\#\text{meas}}$ configurations ($\#\text{prep}$ denoting the number of prepare channels for the subcircuit) for each subcircuit with each configuration needing $(2^{\#\text{meas}})^2=4^{\#\text{meas}}$ calculations, which sums up to a total complexity of $O(12^{\#\text{meas}}\times\ell^{\#\text{prep}})$. Due to the fact that for every subcircuit, we have $\#\text{meas} + \#\text{prep} \leq K$,
    the complexity for finding $f_i(P, \Theta)$ can be bounded by
    \begin{align}
        &12^{\#\text{meas}}\times\ell^{\#\text{prep}} \leq
        12^{\#\text{meas}}\times\ell^{K-\#\text{meas}}\nonumber\\ 
        &= \ell^K \times (12/\ell)^{\#\text{meas}} \nonumber \leq \ell^{K}\times3^{\#\text{meas}}.
    \end{align}
    The last inequality is due to the fact that $\ell\geq4$ from Section \ref{subsec:theory}. Hence, the complexity of obtaining the coefficient functions for all configurations $f_e(P, \Theta)$ is the same as the postprocessing process. Hence, the lemma is proven.
\end{proof}

The coefficient function $f_e(P, \Theta)$ is computed by linearizing the reconstruction identity through differentiation and applying the propagation of uncertainty for linear combinations. In practice, if prior estimates for $P$ are available, the error function can be approximated by substituting prior estimates for $P$. This approach facilitates the optimization of posterior shot distribution by minimizing the error function with prior knowledge.


\subsubsection{Shot Distribution}

Since each configuration are simulated separately, the defined sampling overhead, or the total number of shots, is given by 
\begin{equation}
    N=\sum_{e=1}^E N_e = N_1+N_2+\dots+N_E.
\end{equation}
The Cauchy-Schwarz inequality implies that
\begin{equation}
    (\sum_{e=1}^E N_e)(\sum_{e=1}^E\frac{1}{N_e}f_e(P,\Theta)) \geq (\sum_{e=1}^E\sqrt{f_e(P,\Theta)})^2;
\end{equation}
in other words,
\begin{equation}
    \label{eq:shot-err}
    N \times \text{Err} \geq (\sum_{e=1}^E\sqrt{f_e(P,\Theta)})^2.
\end{equation}

If we treat $f_e(P, \Theta)$ as constant, which is true for a given $\Theta$, we conclude that the error function is inversely proportional to the total number of shots. Moreover, the optimal shot distribution can be derived from this relationship, as equality holds only when the corresponding vectors are linearly dependent. Specifically, for each configuration $e$, we have  
\begin{equation}  
    \frac{\sqrt{N_e}}{\sqrt{f_e(P, \Theta)/N_e}} = \frac{N_e}{\sqrt{f_e(P, \Theta)}} = \text{const}.  
\end{equation}

Therefore, with the total number of shots $N$ fixed, the optimal shot distribution for each configuration $e$ is given by  
\begin{equation}  
    \label{eq:opt_distribute}  
    N_e = N \times \frac{\sqrt{f_e(P, \Theta)}}{\sum_{e=1}^E \sqrt{f_e(P, \Theta)}}.  
\end{equation}

\subsubsection{Optimizing the cost function}
As shown in \eqref{eq:shot-err}, the total number of shots $N$ required to achieve the same $\mathrm{Err}$ decreases if the loss function is reduced:  
\begin{equation}  
    \label{eq:loss}  
    \mathrm{Loss} = \sum_{e=1}^E \sqrt{f_e(P, \Theta)}.  
\end{equation}  
Since the cut parameters $\Theta$ can be chosen arbitrarily, the loss function can be minimized by optimizing $\Theta$. Standard optimization techniques, such as stochastic gradient descent, the Adam optimizer, or simulated annealing, can be employed.

As discussed earlier in \ref{sssec:tradeoff}, the design space for $\Theta$ can be expanded at the cost of increased classical computation. A larger design space allows for a greater range of possible parameters, enabling further reduction of the loss function's minimum value.


There are additional considerations when calculating the coefficient functions $f_e(P, \Theta)$ for determining the optimal shot allocation or minimizing the loss function. To obtain the true values of the coefficient functions, the ground-truth probabilities $\hat{P}$ are required, or alternatively, $\hat{P}$ must be approximated using the estimated probabilities $P$. Consequently, the reliability of $P$ can affect the quality of the solutions.  
To address this issue, we divide the total number of shots into two stages: \emph{prior shots} and \emph{posterior shots}. The prior shots are performed first, with a predefined number of shots allocated to each circuit configuration to generate an initial estimate of $P$. This estimate is then used to calculate the optimal shot distribution for the posterior shots, which are executed in the subsequent step.

\subsubsection{Evaluating the scalability of shot distribution optimization}
To evaluate the advantage of the optimized shot distribution, we compare it with the even shot distribution.

\begin{lemma}
\label{lemma:scalability}
Let $N_\mathrm{even}$ and $N_\mathrm{opt}$ denote the total number of shots required to achieve estimation error $\epsilon$ under the even and optimal shot distribution schemes, respectively.  
Given a set of cut parameters $\Theta$, ground-truth probabilities $P$, and a total of $E$ subcircuit configurations, the ratio between $N_\mathrm{even}$ and $N_\mathrm{opt}$ is bounded as follows:
\begin{equation}
    1 \leq \frac{N_\mathrm{even}}{N_\mathrm{opt}} \leq E.
\end{equation}
\end{lemma}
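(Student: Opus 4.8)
The plan is to derive closed-form expressions for $N_\mathrm{opt}$ and $N_\mathrm{even}$ from Lemma~\ref{lemma:linear variance} together with the optimal allocation \eqref{eq:opt_distribute}, form their quotient, and then sandwich that quotient between $1$ and $E$ using two elementary inequalities on the nonnegative numbers $\sqrt{f_e(P,\Theta)}$. Throughout I would fix the target error $\epsilon$ and the cut parameters $\Theta$, so that the coefficients $f_e \equiv f_e(P,\Theta)$ are treated as constants.

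First I would pin down $N_\mathrm{opt}$. Substituting the optimal distribution \eqref{eq:opt_distribute} into the error expression of Lemma~\ref{lemma:linear variance} saturates the Cauchy-Schwarz bound \eqref{eq:shot-err}; setting $\mathrm{Err}=\epsilon$ then gives $N_\mathrm{opt}\,\epsilon = \big(\sum_{e}\sqrt{f_e}\big)^2$, i.e.\ $N_\mathrm{opt} = \epsilon^{-1}\big(\sum_{e}\sqrt{f_e}\big)^2$. For the even scheme I would set $N_e = N_\mathrm{even}/E$ for every $e$ in the same error expression, obtaining $\mathrm{Err} = (E/N_\mathrm{even})\sum_{e} f_e$; equating this to $\epsilon$ yields $N_\mathrm{even} = \epsilon^{-1}E\sum_{e} f_e$.

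Taking the ratio, the factor $\epsilon^{-1}$ cancels and one is left with
\begin{equation}
\frac{N_\mathrm{even}}{N_\mathrm{opt}} = \frac{E\sum_{e=1}^E f_e}{\big(\sum_{e=1}^E\sqrt{f_e}\big)^2}.
\end{equation}
Writing $g_e=\sqrt{f_e}\ge 0$, the lower bound $N_\mathrm{even}/N_\mathrm{opt}\ge1$ is precisely the Cauchy-Schwarz inequality $\big(\sum_e g_e\big)^2\le E\sum_e g_e^2$, with equality exactly when all $f_e$ coincide---the regime in which even allocation is already optimal. The upper bound $N_\mathrm{even}/N_\mathrm{opt}\le E$ is equivalent to $\sum_e g_e^2\le\big(\sum_e g_e\big)^2$, which I would prove by expanding the square and discarding the nonnegative cross terms $\sum_{e\ne e'} g_e g_{e'}\ge0$; equality holds when at most one $f_e$ is nonzero, i.e.\ when all statistical weight is concentrated in a single configuration.

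The only genuinely substantive point, and hence the main obstacle, is justifying $f_e\ge0$, since both the reality of $g_e=\sqrt{f_e}$ and the cross-term argument rest on it. I would argue this from Lemma~\ref{lemma:linear variance}: each $f_e$ is the contribution of configuration $e$ to $\mathrm{Err}=\mathrm{Tr}\,\Sigma=\sum_i\mathrm{Var}(p_i)$, and because the configurations are independent, varying $N_e$ rescales only the variance injected by that one configuration. As $\mathrm{Err}$ is a sum of genuine variances and is monotone in each $1/N_e$, the coefficient $f_e$ cannot be negative. I would state this nonnegativity explicitly before invoking the cross-term bound, after which both inequalities follow immediately and the claim $1\le N_\mathrm{even}/N_\mathrm{opt}\le E$ is established.
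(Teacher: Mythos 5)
Your proof is correct and follows essentially the same route as the paper's: both reduce the claim to the ratio $E\sum_e f_e \big/ \bigl(\sum_e \sqrt{f_e}\bigr)^2$ and sandwich it using Cauchy--Schwarz (the paper cites it as Sedrakyan's inequality) for the lower bound and the square-of-sums expansion for the upper bound, with the same equality conditions. Your two additions---explicitly deriving the closed forms for $N_\mathrm{even}$ and $N_\mathrm{opt}$ from Lemma~\ref{lemma:linear variance}, and justifying $f_e \geq 0$ before taking square roots---are steps the paper asserts implicitly, so they strengthen rather than change the argument.
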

\begin{proof}
The total number of shots required to achieve estimation error $\epsilon$ under the even shot distribution scheme is
\begin{equation}
N_\mathrm{even} = \frac{E}{\epsilon} \sum_e f_e(P,\Theta),
\end{equation}
while under the optimal shot distribution scheme, it is
\begin{equation}
N_\mathrm{opt} = \frac{1}{\epsilon} \left( \sum_e \sqrt{f_e(P,\Theta)} \right)^2.
\end{equation}

We first establish the lower bound. Applying Sedrakyan's inequality gives
\begin{equation}
\left( \sum_e^E 1 \right) \sum_e f_e(P,\Theta) \geq \left( \sum_e \sqrt{f_e(P,\Theta)} \right)^2,
\end{equation}
which implies
\begin{equation}
R := \frac{N_\mathrm{even}}{N_\mathrm{opt}} 
= \frac{E \sum_e f_e(P,\Theta)}{\left( \sum_e \sqrt{f_e(P,\Theta)} \right)^2} \geq 1.
\end{equation}
Equality holds when $f_e(P,\Theta)$ is uniform across all $e$, a condition rarely satisfied in practical circuits. Thus, $R > 1$ in almost all realistic scenarios.

For the upper bound, we use the inequality
\begin{equation}
\left( \sum_e \sqrt{f_e(P,\Theta)} \right)^2 \geq \sum_e f_e(P,\Theta),
\end{equation}
which leads to
\begin{equation}
R\equiv \frac{N_\mathrm{even}}{N_\mathrm{opt}} \leq E.
\end{equation}
Equality is attained if and only if $f_{e'}(P,\Theta) \neq 0$ for only a single $e' \in \{e\}$, a condition that is typically unattainable unless the circuit is purposely constructed.
In more realistic cases where the original $n$-qubit circuit is designed to yield only a limited number of outcomes and $K\ll n$, the values $f_e(P,\Theta)$ may remain close to zero for most $e$, making the attainable upper bound much greater than unity in practice.
\end{proof}

Noting that $0 \leq \sum_{e} f_e(P, \Theta) \leq 1$, the total number of shots required to achieve an estimation error $\epsilon$ is $O(E/\epsilon)$ under the even distribution scheme. In contrast, under the optimal distribution, this cost is reduced to $O(E/\epsilon) / R$, where $1 \leq R \leq E$ according to Lemma~\ref{lemma:scalability}.
Although the theoretical upper bound $R = E$ is rarely attained in practice---with $R$ often falling well below $E$---the range of possible values for $R$ increases as $E$ grows. Since $E$ scales exponentially with the number of cuts $K$, i.e., $E = O(\ell^K)$, the improvement factor $R$ tends to increase with $K$ in typical scenarios, provided that $K$ remains well below the total qubit number $n$.

\begin{figure}
\centering
\includegraphics[width=0.9\columnwidth]{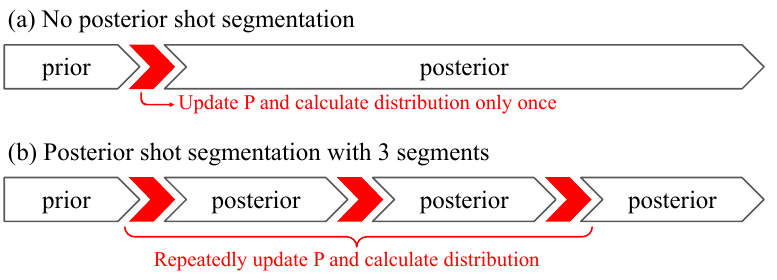}
\caption{Posterior shot segmentation.}
\label{fig:segment}
\end{figure}

\subsection{ShotQC: Additional Strategies}
\subsubsection{Posterior Shot Segmentation}

Calculating the optimal shot distribution requires a rough estimate $P$ of $\hat{P}$ to begin with. Naturally, a more accurate estimate $P$ leads to a more precise calculation of the optimal shot distribution, resulting in greater reductions in sampling overhead. However, improving the reliability of $P$ necessitates increasing the number of prior shots. Since prior and posterior shots share the same total shot budget, allocating more shots to the prior phase reduces the number of posterior shots available for distribution. Therefore, an appropriate balance must be struck when selecting the number of prior shots to maximize overall effectiveness.

On the other hand, potential statistical bias resulting from the prior shots may cause the shot distribution to deviate from the optimal allocation. To address this issue, \emph{posterior shot segmentation} is introduced. Instead of basing the shot distribution solely on the rough probabilities $P$, the posterior shots are divided into multiple stages. At each stage, the optimal shot distribution is refined using the updated estimates obtained from the previous stage. As illustrated in Figure \ref{fig:segment}, this segmentation helps mitigate the risk of prior shot bias misdirecting the overall shot distribution, ensuring a more reliable allocation process.


\subsubsection{Minimal-cost Strategies}
Optimizing the cut parameters $\Theta$ can often be time-consuming, due to the computationally intensive cost function that have to be calculated repeatedly with gradients for optimization. Hence, when the quantum circuit is too large, the parameters can simply be set to zeroes as described earlier. While not being able to exploit the benefits offered by cut parameterization optimization, this economical approach can still benefit from shot distribution optimization with minimum computational load. Moreover, posterior shot segmentation can also be applied with only a constant-factor increase in computational cost.

\subsubsection{Output Variance Minimization}\label{sec:Output Variance Minimization}
In reconstructing the output probability distribution, the variance can be further minimized by adjusting the cut parameters once again. While this approach is compatible with shot distribution optimization, we found that it often has little effect when shot distribution optimization has already been performed. This is because the final shot distribution plays a critical role in determining the variance. If the distribution is calculated for a specific set of cut parameters, altering the cut parameters disrupts this optimality, rendering variance minimization ineffective.  
Alternatively, when the shots are distributed equally and are independent of the cut parameters, a different strategy for adjusting the cut parameters may be applied. This approach can sometimes be more effective, particularly in cases where additional initial states are introduced. Increasing the number of initial states expands the design space, potentially enabling a lower variance minimum at the cost of greater computational overhead.

\section{ShotQC: Evaluation}

\begin{figure*}[t]
    \centering
    \includegraphics[width=\textwidth]{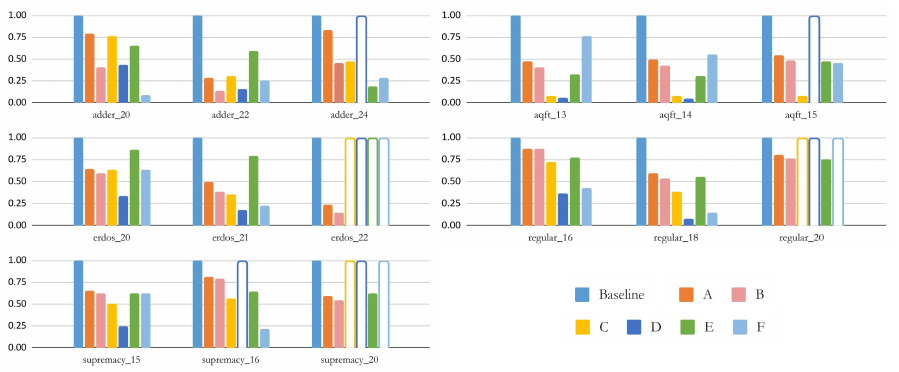}
    \vspace{-1.5em}
    \caption{Simulation results: variance is normalized by the baseline. Hollow bars indicate timeout beyond 10 hours.}
    \label{fig:main}
\end{figure*}

\subsection{Evaluation Settings}
To evaluate statistical error, we simulate the quantum circuits on a the aer-simulator provided by qiskit. Due to this reason, our framework is built on top of the qiskit library for integration with its classical and quantum backends. Furthermore, classical processing in ShotQC is implemented with Pytorch to enable GPU acceleration. All simulations and calculations are executed on a server with up to 10 Intel(R) Xeon(R) w7-2495X CPUs and one Nvidia RTX A4000 GPU with 16GB of memory. 

We use the benchmarks in the CutQC library ~\cite{Tang_2021} and select the cutting points with the same toolbox. The following circuits are used:

\begin{enumerate}
    \item \textit{Quantum Approximate Optimization Algorithm (QAOA).} QAOA is a hybrid classical-quantum algorithm for solving combinatorial optimization problems~\cite{farhi2014quantumapproximateoptimizationalgorithm}. We use QAOA circuits generated by random Erdős–Rényi graphs (denoted as erdos) and random regular graphs (denoted as regular).
    
    \item \textit{Supremacy.} Supremacy circuits are random circuits used to demonstrate quantum advantage by Google~\cite{Boixo2018}. These circuits are highly entangled and generates dense output probabilities, making them exponentially difficult for classical computers to simulate.
    \item \textit{Adder.} Adder circuits are linear-depth ripple-carry addition circuits for arithmetic operations in quantum computing, as proposed by ~\cite{cuccaro2004newquantumripplecarryaddition}.
    \item \textit{Approximate Quantum Fourier Transform (AQFT).} AQFT is an approximate implementation of the commonly used Quantum Fourier Transform (QFT) operation that yields better performance on NISQ devices~\cite{PhysRevA.54.139}.
\end{enumerate}

\subsection{ShotQC settings}
We compare the variance results of six ShotQC settings against the baseline setting with shots evenly distributed for subcircuit configurations. The six settings A--F can be categorized into three types:

\begin{enumerate}
    \item \textbf{Economical Settings}: Settings A and B use minimal computational resources by avoiding optimization of the cut parameters. The prior shot ratio, defined as the fraction of prior shots to the total number of shots, is set to $0.2$, and the $\ell=4$ scheme (initial states: $\ket{0}, \ket{1}, \ket{+}, \ket{+i}$) is selected. Setting B additionally incorporates the posterior shot segmentation technique, dividing the posterior shots into 5 segments.
     
    \item \textbf{Standard Settings}: Settings C and D follow the full ShotQC workflow, including the execution of prior shots, parameter optimization, and the calculation of the optimal shot distribution for posterior shots. The prior shot ratio is set to $0.2$. Setting C employs the $\ell=4$ configuration, while Setting D utilizes the $\ell=6$ configuration (initial states: $\ket{0}, \ket{1}, \ket{+}, \ket{-}, \ket{+i}, \ket{-i}$).
    
    \item \textbf{Direct-Optimize Settings}: Settings E and F implement output variance minimization as described in section~\ref{sec:Output Variance Minimization}. Setting C uses the $\ell=4$ option, while setting D uses the $\ell=6$ option.
\end{enumerate}

\subsection{Experiment Results}

We compared the variance of various ShotQC methods against the baseline approach proposed in~\cite{Tang_2021}, with the total number of shots fixed for each circuit based on the size of the subcircuit configurations. Figure \ref{fig:main} illustrates the effectiveness of ShotQC across multiple circuits under different settings. On average, the best-performing settings achieve a 7.6x reduction in variance, with Setting D attaining up to a 19x reduction in the 14-qubit AQFT circuit.


Among all settings, Setting D, which incorporates both shot distribution optimization and cut parametrization, achieves the highest variance reduction in 7 out of the 15 benchmarks. However, it also encounters the most timeouts due to the computational intensity of its optimizations. In contrast, the economical Setting B achieves an average variance reduction of 2.6x without experiencing any timeouts across all 15 benchmarks, demonstrating its practicality for large-scale circuits. These results highlight a trade-off between cost and performance, suggesting that the choice of setting should depend on the specific requirements of the use case.


\begin{figure}
\centering
\begin{subfigure}{0.48\columnwidth}
    \includegraphics[width=\columnwidth]{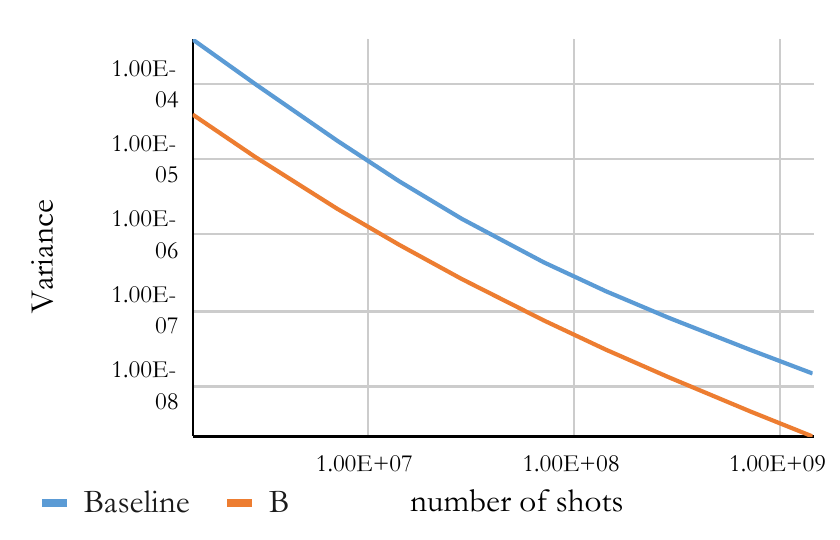}
    \caption{$erdos\_22$: overhead.}
    \label{fig:first}
\end{subfigure}
\hfill
\begin{subfigure}{0.48\columnwidth}
    \includegraphics[width=\columnwidth]{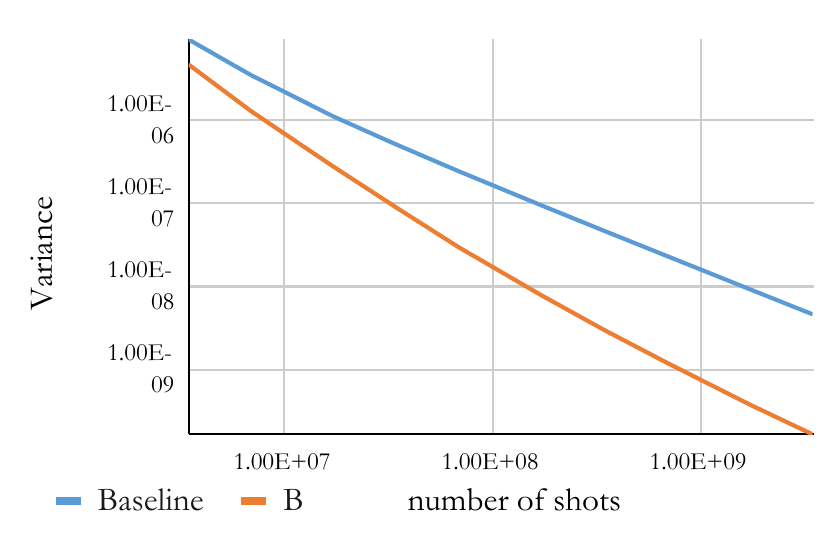}
    \caption{$aqft\_15$: overhead.}
    \label{fig:second}
\end{subfigure}
\hfill
\begin{subfigure}{0.48\columnwidth}
    \includegraphics[width=\columnwidth]{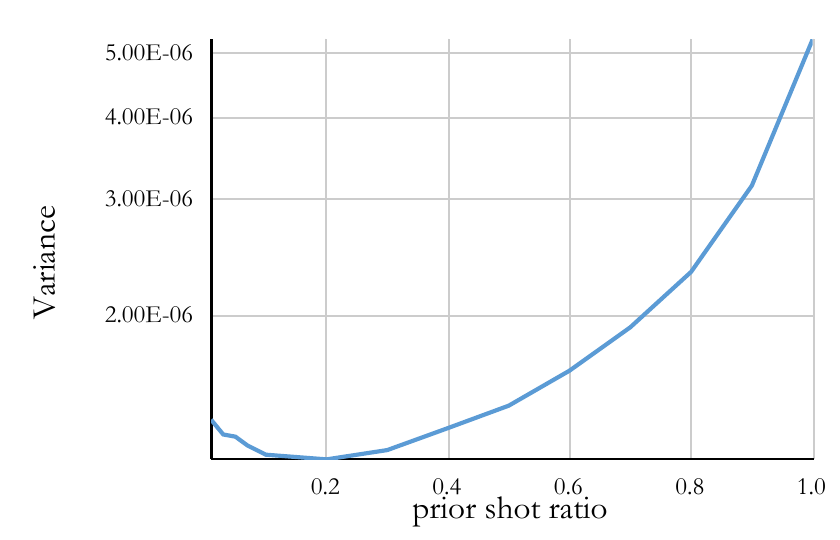}
    \caption{$erdos\_22$: prior shot ratio.}
    \label{fig:third}
\end{subfigure}
\hfill
\begin{subfigure}{0.48\columnwidth}
    \includegraphics[width=\columnwidth]{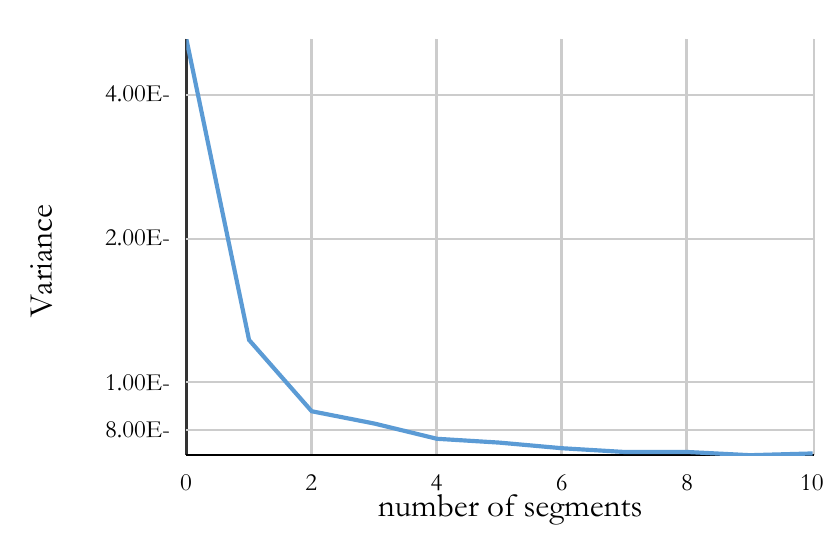}
    \caption{$erdos\_22$: \#segments.}
    \label{fig:fourth}
\end{subfigure}
        
\caption{The effect of different parameters on variance.}
\label{fig:figures}
\end{figure}

\subsection{Discussions}
In the experiments shown in Figure~\ref{fig:figures}, ShotQC reduces variance under a fixed number of shots rather than directly reducing sampling overhead. However, as explained in Section \ref{sec:problem}, reducing variance is effectively equivalent to reducing sampling overhead. Figures \ref{fig:first} and \ref{fig:second} illustrate this relationship, comparing the baseline method to Setting B for the $erdos\_22$ circuit and the $aqft\_15$ circuit, respectively. Additionally, we observed that the effectiveness of ShotQC tends to improve as the total number of shots increases, as exemplified in Figure \ref{fig:second}. This improvement arises from a larger allocation of prior shots, which provides more accurate estimates for $\hat{P}$, enabling better calculations of the optimal shot distribution.


We also examined the impact of the prior shot ratio and the number of segments on the $erdos\_22$ circuit. In Figure \ref{fig:third}, where the number of segments is fixed at 1, an excessively high prior shot ratio reduces the number of posterior shots available for allocation, leading to increased variance. Conversely, an overly low prior shot ratio results in inaccurate estimates for output probabilities, which results in worse performance. Figure \ref{fig:fourth} shows that variance decreases rapidly at first as the number of segments increases but gradually converges. This behavior occurs because while posterior shot segmentation mitigates overshooting, it does not inherently lower variance. As a result, the benefits of posterior shot segmentation diminish quickly, as reflected in the figure.


\begin{figure}
\begin{subfigure}{0.48\columnwidth}
    \includegraphics[width=\columnwidth]{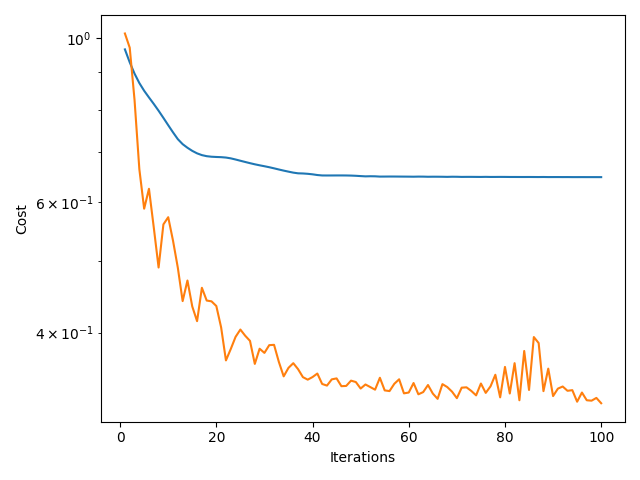}
    \caption{Blue: $\ell=4$; orange: $\ell=6$.}
    \label{fig:convergence}
\end{subfigure}
\hfill
\begin{subfigure}{0.48\columnwidth}
    \includegraphics[width=\columnwidth]{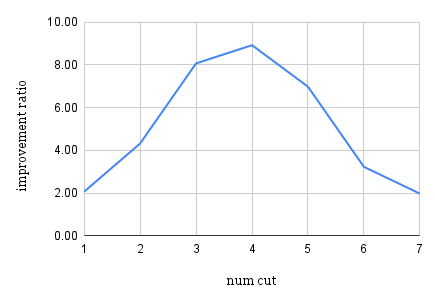}
    \caption{Effectiveness vs cut size}
    \label{fig:scalability}
\end{subfigure}
\caption{On the convergence and scalability of ShotQC.}
\end{figure}

Moreover, we evaluated the optimization effectiveness under different preparation state set $S_\text{prep}$ using the $adder\_20$ circuit with Settings C and D. The number of optimization iterations was fixed at 100, and we compared the performance of the $\ell=4$ and $\ell=6$ schemes, as shown in Figure \ref{fig:convergence}. Although the $\ell=6$ scheme requires more runtime and initially performs worse than the $\ell=4$ scheme, it eventually converges to a significantly lower loss (or variance) after more iterations. This outcome highlights the impact of introducing additional degrees of freedom, as greater flexibility enables lower minima. As in previous findings, the trade-off between classical computational cost and sampling overhead reduction must be carefully assessed for each circuit.

Finally, we empirically evaluated how the benefit provided by ShotQC responds to changes in the cut number $K$ by conducting experiments on an 8-qubit Bernstein--Vazirani circuit~\cite{doi:10.1137/S0097539796300921}. The improvement factor $R$ was measured by comparing the baseline setting to setting A, with the cut number gradually increasing from 1 to 7. The results, shown in Figure~\ref{fig:scalability}, support the trend suggested in the discussion following Lemma~\ref{lemma:scalability}: $R$ tends to increase with $K$, indicating that ShotQC yields greater relative gains as more cuts are introduced---so long as $K$ remains well below the total number of qubits in the original circuit. In this case, the trend holds for $K \leq 4 < n = 7$. When $K > 4$, however, $R$ begins to decline, likely due to the increased overhead from the growing configuration space and reduced effectiveness of shot reallocation, which together outweigh the benefits of optimal distribution.



\section{Conclusion}

In this work, we present ShotQC, an enhanced circuit cutting framework for sampling overhead reduction in practical settings. ShotQC utilizes optimizations in shot distribution and cut parameterization, achieving up to a significant 19x reduction in sampling overhead during evaluation. Moreover, we proposed different ShotQC settings with trade-offs between runtime and variance reduction, allowing flexible choices for a wide variety of circuits. ShotQC motivates further research in practical overhead mitigation methods for circuit cutting, with potential further investigation into designing optimal initial states for minimum overhead.

\begin{acknowledgments}
The authors would like to thank Tain-Fu Chen and Yu-Hung Pan for useful discussions.
This work was supported in part by the National Science and Technology Council of Taiwan through Grant 
114-2119-M-002-020, as well as by the NTU Center for Data Intelligence: Technologies, Applications, and Systems under Grant 
NTU-114L900903. 
\end{acknowledgments}
\bibliography{references}

\end{document}